\newcommand\kreuz{\Sigma}
\newcommand{\defin}{\mathrel{:=}}
\newcommand{\indef}{\mathrel{=:}}
\newcommand{\mathd}{\mathrm{d}}
\newenvironment{itemizedot}{\begin{itemize} }{\end{itemize}}
\theoremstyle{plain}
\newtheorem{proposition}{Proposition}
\newtheorem{corollary}[proposition]{Corollary}
\theoremstyle{definition}
\newtheorem{definition}[proposition]{Definition}
\newtheorem{remark}{Remark}
\begin{document}

\title{A mathematical framework for raw counts\\
  of single-cell RNA-seq data analysis}


\author{Silvia~Giulia~Galfr{\`e}\thanks{University of Roma Tor Vergata
    and Scuola Normale Superiore, Pisa.} \and
  Francesco~Morandin\thanks{Department of Mathematical, Physical and Computer Sciences, University of Parma.}}

\maketitle

\begin{abstract}
  Single-cell RNA-seq data are challenging because of the sparseness
  of the read counts, the tiny expression of many relevant genes, and
  the variability in the efficiency of RNA extraction for different
  cells. We consider a simple probabilistic model for read counts,
  based on a negative binomial distribution for each gene, modified by
  a cell-dependent coefficient interpreted as an extraction
  efficiency. We provide two alternative fast methods to estimate the
  model parameters, together with the probability that a cell results
  in zero read counts for a gene. This allows to measure genes
  co-expression and differential expression in a novel way.
\end{abstract}

\section{Introduction}

In recent years the availability of rich biological datasets is
challenging the flexibility and robustness of statistical
techniques. In fact, while when the size of the sample is moderate it
is customary and accepted to make quite strong assumptions on the
underlying distributions, in the contest of big data this could often
lead to obvious distortions and inconsistencies. This can be relevant
in particular in the case of ``omics'' data (proteomics, genomics or
transcriptomics), of which single-cell RNA sequencing (scRNA-seq) is a
very recent and exceptionally difficult
example~\cite{scRNAseq-Rev2015,RosSve2017,LueThe2019}.

Single-cell RNA-seq data are large matrices with genes in the rows and
single cells in the columns, with integer read counts in each
component.  The vast majority (80\% and more) of the read counts are
zero and an even larger fraction (95\% and more) of the genes has an
average of less than 1 read count per cell. Nonetheless it is believed
that around 20\% of genes are typically active and functional in a
cell, and many of these are transcription factors, whose subtle
modulation controls the functions and state of the cell.

Given this preamble it is not surprising that the analysis of
scRNA-seq is a very important topic and that a general robust approach
is yet to be found. In this paper we present a new promising
mathematical framework, and propose suitable parameter estimators and
statistical inference for gene co-expression.

Most statistical models for scRNA-seq data, deal with what is usually
called \emph{level of expression}, which is obtained from read counts
by pseudocount addition and log-transformation
(see~\cite{Chen-scRNAseq-Rev2019,Choi-scRNAseq-Rev2019} and references
therein). There have been many attempts to normalize and
variance-stabilize these quantities, but it remains a difficult
problem (see for example~\cite{RosSve2017,SAVER2018,HafSat2019}).

Our probabilistic model, on the other hand, belongs to the part of the
literature that deals with the raw integer read counts (see among the
others~\cite{Deseq2-2014,SCDE2014,basics2015,scImpute2018,SAVER2018,AmrHar2019}
and references therein). Our model is in particular similar to the one
introduced by BASiCS~\cite{basics2015}, but we use it in a
non-Bayesian contest and we do not model spike-ins.

For each cell $c$ and gene $g$, we model the number of read counts
$R_{g, c}$ with conditional Poisson distribution
\[
R_{g,c}|\nu_c,\Lambda_g^{(c)}\sim\operatorname{Poisson}(\nu_c\Lambda_g^{(c)})
\]
depending on:
\begin{itemizedot}
\item a deterministic {\emph{extraction efficiency parameter}} $\nu_c$
  which modulates the expression of all genes for that cell, and
\item a random \emph{potential biological expression level}
  $\Lambda_g^{(c)}$ for each gene.
\end{itemizedot}
For the first part of the paper, when estimating $\nu_c$ and
$\lambda_g\defin E(\Lambda_g^{(c)})$,  we make no assumptions on the
distribution $\mathcal L_g$ of $\Lambda_g^{(c)}$

In the second part, we need to estimate the probability of zero read
counts $P(R_{g,c}=0)$, and to this end we make the further assumption
that this probability can be approximated by assuming that
$\mathcal L_g$ is gamma with mean $\lambda_g$, and
variance $a_g\lambda_g^2$ that can be fitted on the total number of
zero read counts for that gene.  Equivalently, $R_{g,c}$ is considered
of negative binomial distribution with mean $\nu_c\lambda_g$ and
dispersion $a_g$.

We remark that this assumption does not concern the whole distribution
of the read counts, but only the probability of zero, which then takes
the form typical of the negative binomial,
\[
  P(R_{g,c}=0)
  \approx\biggl(\frac{a_g^{-1}}{\nu_c\lambda_g+a_g^{-1}}\biggr)^{a_g^{-1}}.
\]

Often in the literature the frequency of zero read counts has been
considered not completely explained, when using the most natural
statistical models, and the concept of \emph{dropout} has been
introduced~\cite{SCDE2014,ZIFA2015,MAST2015,SCDD2016}. Recently there
have been criticism on this subject~\cite{Sve2020} and it is unclear
if the need of zero-inflated distributions is really a technical
issue, or in fact it is an artifact due to the use of log-transformations,
or limited to the case of non-UMI datasets~\cite{VieEtA2017}.

In our model, zero read counts are considered effects of biological
variability and random extraction, and in fact the inference itself is
based precisely on the occurrence of these events.

After the model is introduced in Section~\ref{s:model}, the remainder
of the paper is organized as follows.

In Section~\ref{s:estimation} we propose two fast methods to get
estimates of the relevant parameters, both based on moment estimation,
and discuss their validity. Maximum likelihood estimation is in good
accordance with our methods, but it requires more resources and has to
assume the class of $\mathcal L_g$ (typically gamma); this is a choice
that we defer until the inference in subsequent sections.

In Section~\ref{s:probzero} we introduce a way to estimate the
probability of zero read counts, using the estimated parameters and
making some assumptions on $\mathcal L_g$, in particular that it can
be approximated by a gamma distribution. A natural way to estimate its
second parameter, is to fit the total number of cells with zero read
counts for gene $g$.

In Section~\ref{s:tables} we build co-expression tables, which are similar to
contingency tables, but count the number of cells in which two genes have been
found expressed together. It is shown that these cannot be analysed like
classical contingency tables, because the different efficiency of the cells
would cause spurious correlations. Nevertheless the estimates built on the
previous sections allow to design a statistical test for independence and a
co-expression index. Extensions to differential expression analysis and to a
global differentiation index are discussed.

In Section~\ref{s:synthetic} we report the results of the numerical
simulations with synthetic datasets, used to evaluate the estimators, the
distribution of the statistics, and the false positive rate of the tests.

A twin paper with a computational-biology point of view (currently in the
final stages of processing), deals with the application of this framework to
real biological datasets and includes the software implementation of all the
tools.

\section{Model}

\label{s:model}Single-cell RNA-seq data analysis is generally
performed on a huge matrix of counts
$R = (R_{g, c})_{g \in G, c \in C}$, where $G$ and $C$ are the sets of
genes and cells respectively. Typical sizes are
$n \defin | G | \sim 15000$ and $m \defin | C | \sim 1000$--$10000$. The
read counts $R_{g, c}$ are non-negative integers, with many zeros.

Usually for bulk RNA-seq, where there is no information at single cell level,
the counts $R_g$ are modeled with the gamma-Poisson mixture (also known as
negative binomial distribution), which is quite suited to the need, as it is
supported on the non-negative integers and has two real parameters that ensure
a good flexibility (see~\cite{edgeR2010,Deseq2-2014} among the others).

From a physical point of view, this can be interpreted as a model in which the
total amount of RNA molecules of gene $g$ is approximated by a gamma random
variable $\Lambda_g \sim \text{gamma} (\eta_g, \theta_g)$ with parameters
depending on $g$, and the number of reads has then Poisson conditional
distribution $R_g | \Lambda_g \sim \text{Poisson} (\nu \Lambda_g)$ with a
small efficiency $\nu$.

One of the challenges of single-cell RNA-seq is that one should consider a
different efficiency $\nu_c$ for each cell $c$, and that a single gamma
distribution may not be able to account for two or more cell conditions or
types inside the experiment's population.

To reduce technical noise, which in our model is not accounted for, we
make the assumption of dealing with a post-quality-control scRNA-seq
dataset with UMI\footnote{Unique Molecular Identifiers are molecular
  labels that nearly eliminate amplification noise~\cite{umi2013}.}
counts as input.

Given these assumptions, we will model the counts $R_{g, c}$ as random
variables with Poisson conditional distribution
\begin{equation}
  R_{g, c} | \Lambda_g^{(c)} \sim \text{Poisson} (\nu_c \Lambda_g^{(c)}),
  \qquad \text{(conditionally independent)}
\end{equation}
and the real number of molecules $\Lambda_g^{(c)}$ with some unknown
distribution.

Since $\nu_c$ and $\Lambda_g^{(c)}$ are everywhere multiplied together, they
can only be known up to a multiplicative constant. Without loss of generality,
we will assume throughout this paper that this constant is fixed in such a way
that
\begin{equation}
  \label{e:nustar} \nu_{\ast} \defin \frac{1}{m} \sum_{c \in C} \nu_c = 1,
\end{equation}
hence $\Lambda_g^{(c)}$ will be rescaled accordingly, and it will not represent
the real number of molecules, but just some typical value for the counts.

We will suppose that, for $c \in C$, the columns $\Lambda^{(c)} \defin
(\Lambda_g^{(c)})_{g \in G}$ are i.i.d.~random vectors with distribution
$\mathcal{L}$ on $\mathbb{R}_+^G$, and that $\mathcal{L}$ has expectation
$\lambda = (\lambda_g)_{g \in G}$ and covariance matrix $Q \defin (Q_{g,
h})_{g, h \in G}$ so that,
\[ \lambda_g \defin E (\Lambda_g^{(c)}) \qquad
   \text{and} \qquad Q_{g, h} \defin \operatorname{Cov}
   (\Lambda_g^{(c)}, \Lambda_h^{(c)}), \qquad c \in C
\]
In Section~\ref{s:estimation} we will show how to estimate the parameters
$(\nu_c)_{c \in C}$ and $(\lambda_g)_{g \in G}$. The biological information on
the differentiation of the cells in the sample, is instead encoded inside $Q$
and will be the subject of the subsequent sections.

\section{Parameter estimation}

\label{s:estimation}A direct computation shows that
\begin{equation}
  \label{e:mu-gc} \mu_{g, c} \defin E (R_{g, c}) = E [E (R_{g, c} |
  \Lambda_g^{(c)})] = \nu_c E (\Lambda_g^{(c)}) = \nu_c \lambda_g .
\end{equation}
The quantity $\mu_{g, c}$ represents the expected read count number, and takes
into account the efficiency $\nu_c$ of cell $c$ and the average expression
level $\lambda_g$ of gene $g$.

The formula for the variance can be obtained similarly, but it depends on one
additional parameter $a_g \defin \frac{\operatorname{Var} (\Lambda_g^{(c)})}{E
(\Lambda_g^{(c)})^2}$ and will not be used much, but we give it for
completeness and reference,
\begin{equation}
  \label{e:Var-R} \operatorname{Var} (R_{g, c}) = E [\operatorname{Var} (R_{g, c} |
  \Lambda_g^{(c)})] + \operatorname{Var} [E (R_{g, c} | \Lambda_g^{(c)})] = \mu_{g, c}
  + a_g \mu_{g, c}^2 .
\end{equation}
Notice that the non-homogeneous dependence on $\nu_c$ explains quite
well the fact that no scaling factor can be used to normalize data so
that the variance is stabilized~\cite{HafSat2019}.

In the remainder of this section we develop two fast methods to
estimate $\mu_{g, c}$ for all genes $g$ and cells $c$. The first one
is simple and straightforward, but may sometimes be affected by few
genes with high level of expression and large biological
variability. The second one is based on a variance stabilizing
transformation that, to our knowledge, is used here for the first time
for scRNA-seq data analysis. It shows some small bias but should be
more stable with respect to random variations in the most expressed
genes.

Both these methods are based on moment estimation and do not assume
anything about the distribution $\mathcal L_g$. Maximum likelihood
estimation on the other hand may be preferred when the distribution of
$\mathcal L_g$ can be safely assumed to be gamma. For example this is
the case of a single cluster of cells of similar expression, and we
used this approach in Section~\ref{s:synthetic} to estimate
parameters to generate synthetic datasets. We do not delve into this
matter here.

Even though we give some provable statements to establish good
properties of our estimators, it is quite difficult to assess their
precision and accuracy. Section~\ref{s:synthetic} explains how we
generated several \emph{realistic} synthetic datasets and used them to
gauge the estimators. Figure~\ref{f:nu_lambda_est} shows the results.

\subsection{Average estimation}

The most natural way to estimate the parameters is the following. Define the
rows, columns and global averages by
\begin{equation}
  R_{g, \ast} \defin \frac{1}{m} \sum_{c \in C} R_{g, c}, \qquad R_{\ast, c} \defin \frac{1}{n} \sum_{g \in G} R_{g, c},
  \qquad R_{\ast, \ast} \defin \frac{1}{mn} \sum_{g,
  c} R_{g, c} .
\end{equation}
\begin{definition}
  The {\emph{average estimators}} of the parameters, marked with the ``hat''
  symbol, are given by
\[ \hat{\lambda}_g \defin R_{g, \ast}, \qquad
   \hat{\nu}_c \defin \frac{R_{\ast, c}}{R_{\ast, \ast}}, \qquad \text{and} \qquad \hat{\mu}_{g, c}
   \defin \frac{R_{g, \ast} \cdot R_{\ast, c}}{R_{\ast, \ast}} . \]
\end{definition}
\begin{proposition}
  The average estimator of $\lambda_g$ is unbiased.
  Moreover $E (R_{\ast,
  \ast}) = \lambda_{\ast}$ and $E (R_{\ast, c}) = \nu_c \lambda_{\ast}$.
\end{proposition}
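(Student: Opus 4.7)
The plan is to reduce everything to the identity $E(R_{g,c}) = \nu_c \lambda_g$ established in~\eqref{e:mu-gc}, together with the normalization $\nu_\ast = 1$ fixed in~\eqref{e:nustar}, by using only linearity of expectation. All three claims will then drop out in one or two lines each, so there is no real obstacle, only a matter of bookkeeping (and of reading $\lambda_\ast$ as $\frac{1}{n}\sum_{g\in G}\lambda_g$, by analogy with the convention for $\nu_\ast$).

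First, for unbiasedness of $\hat\lambda_g$, I would expand
\[
  E(\hat\lambda_g) = E(R_{g,\ast}) = \frac{1}{m}\sum_{c\in C} E(R_{g,c}) = \frac{1}{m}\sum_{c\in C}\nu_c\lambda_g = \lambda_g\cdot\nu_\ast = \lambda_g,
\]
where the second equality uses linearity, the third uses~\eqref{e:mu-gc}, and the last uses~\eqref{e:nustar}.

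Next, I would treat $E(R_{\ast,c})$ symmetrically, summing over genes instead of cells:
\[
  E(R_{\ast,c}) = \frac{1}{n}\sum_{g\in G} E(R_{g,c}) = \frac{1}{n}\sum_{g\in G}\nu_c\lambda_g = \nu_c\lambda_\ast.
\]
Finally, for $E(R_{\ast,\ast})$ I would either average the previous identity over $c$, or sum the doubly-indexed expression directly,
\[
  E(R_{\ast,\ast}) = \frac{1}{mn}\sum_{g,c} E(R_{g,c}) = \frac{1}{mn}\sum_{g,c}\nu_c\lambda_g = \nu_\ast\lambda_\ast = \lambda_\ast,
\]
again invoking $\nu_\ast=1$ at the end. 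The only subtlety worth flagging in the write-up is that the definition $\lambda_\ast \defin \frac{1}{n}\sum_{g\in G}\lambda_g$ is being introduced implicitly through the statement; once that is in place the proof is purely mechanical.
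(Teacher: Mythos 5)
Your proof is correct and follows exactly the same route as the paper's: linearity of expectation, the identity $E(R_{g,c})=\nu_c\lambda_g$ from~\eqref{e:mu-gc}, and the normalization $\nu_\ast=1$ from~\eqref{e:nustar}. The paper only writes out the first case and says ``analogously for the other cases,'' whereas you spell out all three (and correctly note the implicit convention $\lambda_\ast=\frac{1}{n}\sum_{g\in G}\lambda_g$), which is just a more explicit version of the same argument.
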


\begin{proof}
  By equations~{\eqref{e:nustar}} and~{\eqref{e:mu-gc}},
  \[ E (\hat{\lambda}_g) = \frac{1}{m} \sum_{c \in C} E (R_{g, c}) =
     \frac{1}{m} \sum_{c \in C} \nu_c \lambda_g = \lambda_g \]
  and analogously for the other cases.
\end{proof}

On some real biological datasets this appears to be a poor way to
estimate the unknown parameters. In particular there is evidence that
$R_{\ast, c}$ may be too sensible to the few genes that have both high
reads and large biological variability between cells. Since we plan to
use estimates of $\nu_c$ to normalize the dataset, this would be a
source of spurious correlations, and difficult to deal with.

\subsubsection{Problems of average estimation}

A mathematical explaination of the occasional weakness of these estimators
could be the following.

Suppose we are looking for weights $(w_g)_{g \in G}$ such that a linear
combination of the counts $A_c (w) \defin \sum_{g \in G} w_g R_{g, c}$ is a
good estimator of $\nu_c$. Notice that $\hat{\nu}_c$ is such an estimator, and
it is characterized by having uniform weights $w_g \defin \bar{w}$, whose
value is fixed by the additional constraint that $\frac{1}{m} \sum_{c \in C}
A_c (w) = 1$,
\[ 1 = \frac{1}{m} \sum_{c \in C} A_c (w) = \frac{1}{m} \sum_{c \in C} \sum_{g
   \in G} w_g R_{g, c} = \text{} \sum_{g \in G} w_g  \hat{\lambda}_g \]
yielding $\bar{w} = n^{- 1} R_{\ast, \ast}^{- 1}$.

With this insight, let us consider $A_c (w)$ under the somewhat simpler
constraint $\sum_g w_g \lambda_g = 1$. Then $E (A_c) = \nu_c$, so $A_c (w)$ is
an unbiased estimator of $\nu_c$ for all choices of the weights. Since there
is no independence between counts of different genes, the variance is more
complicated and must be computed with conditional expectations,
\[ \operatorname{Var} (A_c) = E [\operatorname{Var} (A_c | \Lambda^{(c)})] + \operatorname{Var} (E
   [A_c | \Lambda^{(c)}]) = \nu_c  \sum_g w_g^2 \lambda_g + \nu_c^2  \langle
   w, Qw \rangle . \]
If the term $\nu_c^2  \langle w, Qw \rangle$ was not present, the variance of
$A_c (w)$ would have been minimized, under the constraint, by choosing $w_g
\equiv \text{const}$, as a direct computation with Lagrange multipliers shows.
The presence of this term, on the other hand, hints that the weights should be
smaller for genes with large biological variability. Unfortunately it is very
difficult to estimate it, as it depends on the whole covariance matrix $Q$,
which is what actually holds the biological information on the differentiation
of the cells in the experiment's population.

Apart for this sub-optimality of the constant weights in terms of total
variance, a second problem (which may even be more serious) is that even with
optimal weights, the estimator would correlate in particular with high
variance genes, while one of our targets is to have it as much uncorrelated as
possible to single genes.

\subsection{Square root estimation}

To get estimates that may be more robust in the cases where \emph{average
estimators} are not, we recall that the square root of a Poisson random
variable of mean $x$ has variance $\tau (x)$ which depends weakly on $x$, in
particular, $\tau (x) \rightarrow 1 / 4$ as $x \rightarrow \infty$. This
useful property is at the base of a classical variance-stabilizing
transformation that is expected to improve the robustness of averages at the
cost of adding a small additional bias.

Let us introduce the \emph{square root counts} and their rows and columns
averages,
\begin{equation}
  X_{g, c} \defin \sqrt{R_{g, c}}, \qquad X_{g, \ast}
  \defin \frac{1}{m} \sum_{c \in C} X_{g, c}, \qquad
  X_{\ast, c} \defin \frac{1}{n} \sum_{g \in G} X_{g, c} .
\end{equation}
We will need also the corresponding sample variances
\begin{equation}
  S_{g, \ast}^2 \defin \frac{1}{m - 1} \sum_{c \in C} (X_{g, c} - X_{g,
  \ast})^2, \qquad S_{\ast, c}^2 \defin \frac{1}{n -
  1} \sum_{g \in G} (X_{g, c} - X_{\ast, c})^2 .
\end{equation}
Then we introduce our main estimators, whose properties will be analyzed in
the remarks and proposition below.

\begin{definition}
  The {\emph{square-root estimators}} of the parameters, marked with the
  ``check'' symbol, are given by
  \begin{align*}
    \check{\lambda}_g & \defin \psi (X_{g, \ast}) + \frac{1}{2} \psi''
    (X_{g, \ast}) \cdot \left[ \frac{m - 1}{m} S_{g, \ast}^2 - \psi (X_{g,
    \ast}) + X_{g, \ast}^2 \right], \qquad g \in G\\
    \check{\nu}_c & \defin  \frac{\tilde{\nu}_c}{\tilde{\nu}_{\ast}}
    \defin \frac{\tilde{\nu}_c}{\frac{1}{m} \sum_{u \in C} \tilde{\nu}_u},
    \qquad c \in C\\
    \check{\mu}_{g, c} & \defin  \check{\lambda}_g \check{\nu}_c,
  \end{align*}
  where
  \[ \tilde{\nu}_c \defin \psi (X_{\ast, c}) + \frac{1}{2} \psi'' (X_{\ast,
     c}) \cdot \left[ \frac{n - 1}{n} S_{\ast, c}^2 - \psi (X_{\ast, c}) +
     X_{\ast, c}^2 \right], \qquad c \in C, \]
  and $\psi = \varphi^{- 1}$ is the inverse of the function $\varphi :
  \mathbb{R}_+ \rightarrow \mathbb{R}_+$ defined by
  \[ \varphi (x) \defin E \left[ \sqrt{\text{Poisson} (x)} \right] \defin
     \sum_{k \geq 1} \sqrt{k}  \frac{x^k}{k!} e^{- x},
     \qquad x \geq 0. \]
\end{definition}

\begin{remark}
  The main term of the formula for $\check{\lambda}_g$ is $\psi (X_{g,
  \ast})$ and for large $x$, we have $\psi (x) \approx x^2$, so
  $\check{\lambda}_g \approx X_{g, \ast}^2$ plus some correction terms, and
  analogously for $\tilde{\nu}_c$. (See Figure~\ref{f:tau_psi} below.)
  
  To see that $\psi (x) \approx x^2$, let be $x \geq 0$ and consider $R
  \sim \text{Poisson} (\psi (x))$; then $E \left[ \sqrt{R} \right] = \varphi
  (\psi (x)) = x$ and $E [R] = \psi (x)$, so that
  \[ \psi (x) - x^2 = \operatorname{Var} \left( \sqrt{R} \right) = \tau (\psi (x)) \in
     [0, L], \]
  where $L = \max_x \tau (x) \approx 0.4125$. This also implies that $\varphi
  (x) = \sqrt{x - \tau (x)}$.
  
  \begin{figure}[ht]
    \includegraphics[height=52mm]{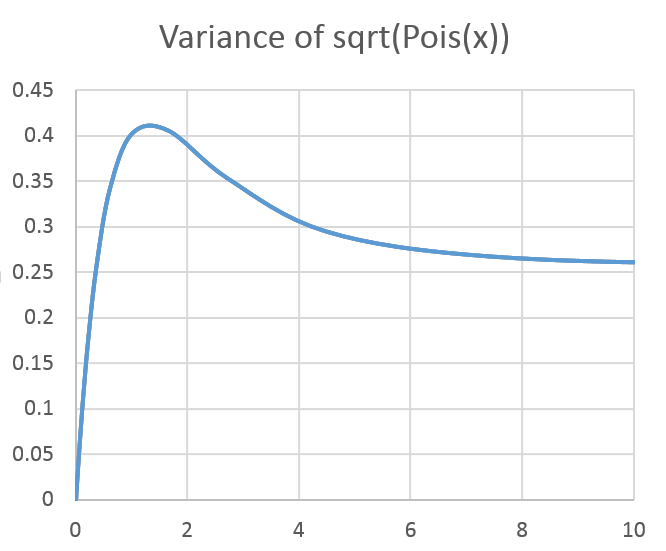}
    \makebox[60mm][l]{\includegraphics[height=54mm]{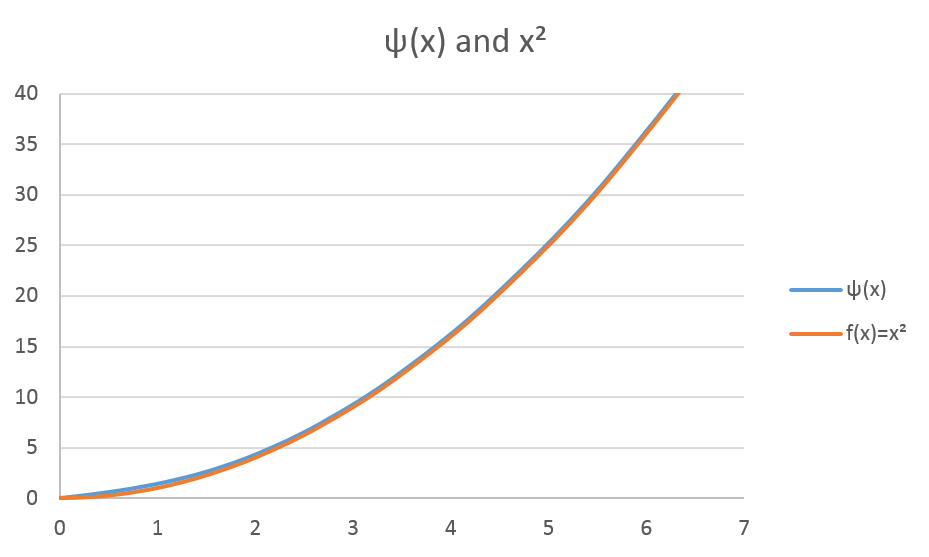}}
    \caption{\label{f:tau_psi}Plot of $\tau (x)$ and of $\psi (x)$ together
    with $x^2$.}
  \end{figure}
\end{remark}

\begin{remark}
  We stress that square (or square root) and average do not commute, and in
  particular by Jensen inequality we get,
  \[ X_{g, \ast}^2 = \left( \frac{1}{m} \sum_{c \in C} X_{g, c} \right)^2
     \leq \frac{1}{m} \sum_{c \in C} X_{g, c}^2 = \hat{\lambda}_g . \]
  Hence, as $\hat{\lambda}_g$ is an unbiased estimator, $X_{g, \ast}^2$ in
  itself would be a poor estimator of $\lambda_g$, with systematic negative
  bias. The square root estimator $\check{\lambda}_g$ is a second order
  correction of the above approach.
\end{remark}

\begin{figure}[p]
  
  \includegraphics[width=0.88\textwidth]{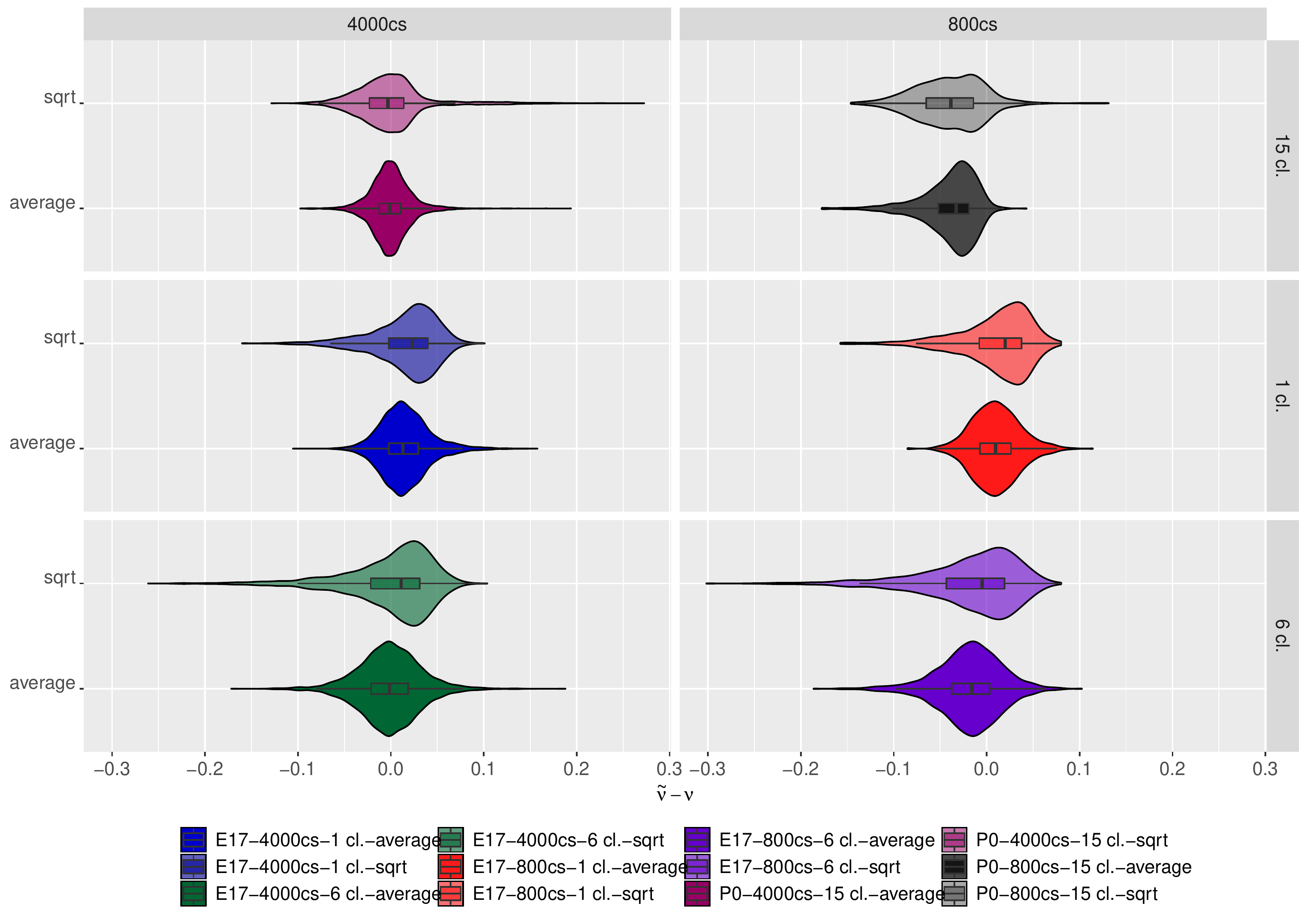}

  \medskip

  \includegraphics[width=0.88\textwidth]{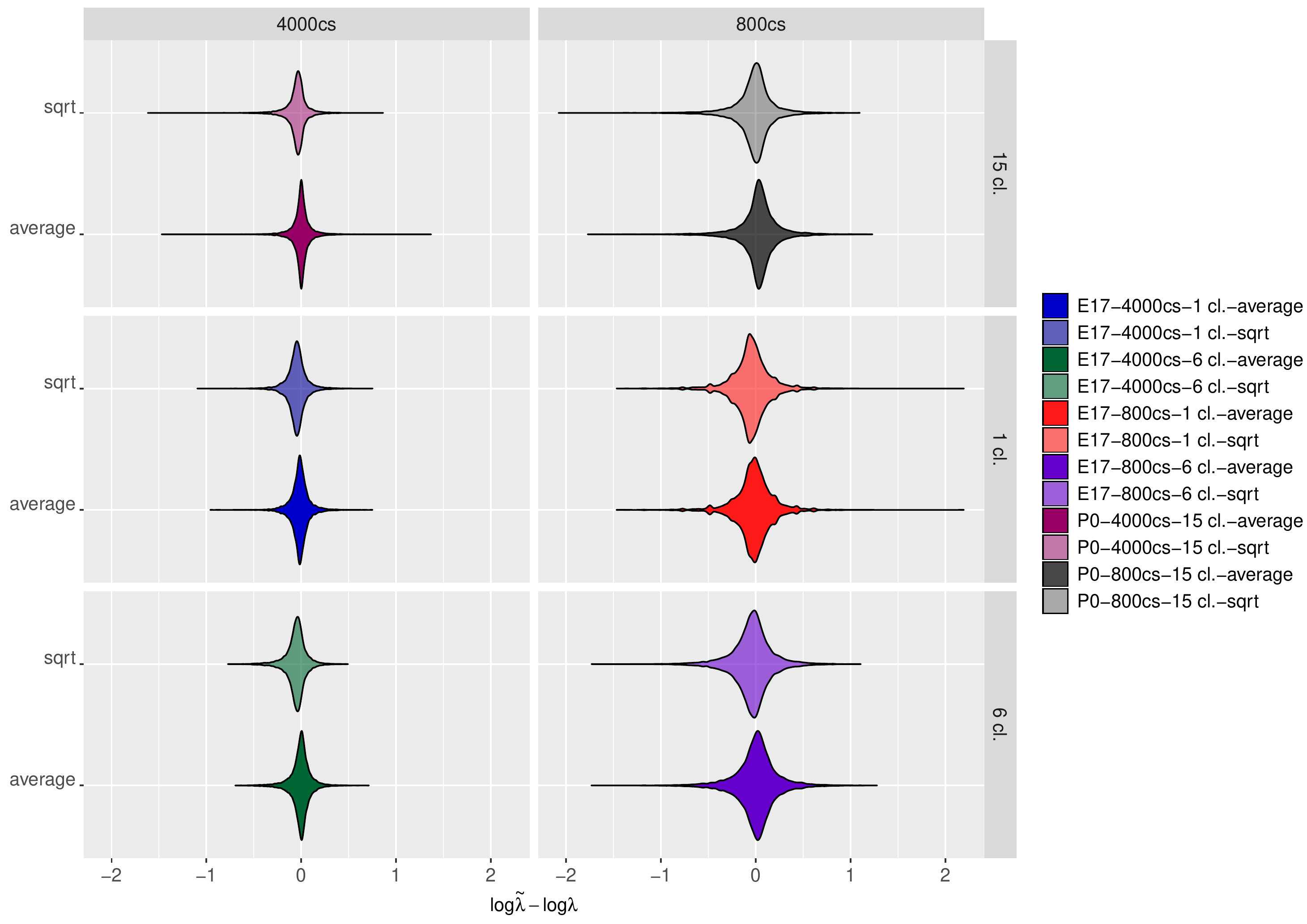}
  \caption{Accuracy and precision of $\nu$ and $\lambda$
    estimators. Recall that $\nu_\ast=1$, so the typical value of
    $\nu_c$ is 1 and therefore the typical CV of the estimators for
    $\nu$ is about $0.04$. The corresponding result for $\lambda$
    shows a strong dependence on the number of cells.}
  \label{f:nu_lambda_est}
\end{figure}

The following proposition gives a non-rigorous argument to show that the terms
in the square root estimators are the right ones to get the smallest bias.

\begin{proposition}
  The statistics $\check{\lambda}_g$, $\check{\nu}_c$ and
  $\check{\mu}_{g, c}$ estimate $\lambda_g$, $\nu_c$ and $\mu_{g, c}$ with
  a small bias depending on the unknown distribution $\mathcal{L}$ of
  $\Lambda_g^{(c)}$ and an additional error of order $m^{- 1 / 2}$.
\end{proposition}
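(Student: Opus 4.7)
My plan is to rewrite $\check\lambda_g$ in a transparent form and then apply a two-stage Taylor expansion that separates the stochastic fluctuations of $X_{g,\ast}$ from the distributional contribution of $\mathcal L$. The first, algebraic step uses the elementary identity $\frac{m-1}{m}S_{g,\ast}^2=\hat\lambda_g-X_{g,\ast}^2$ (which follows from $X_{g,c}^2=R_{g,c}$) to collapse the bracket in the definition of $\check\lambda_g$ to $\hat\lambda_g-\psi(X_{g,\ast})$, so that
\[
\check\lambda_g=\hat\lambda_g-\tfrac{1}{2}\bigl(2-\psi''(X_{g,\ast})\bigr)\bigl[\hat\lambda_g-\psi(X_{g,\ast})\bigr].
\]
Since the previous proposition shows that $\hat\lambda_g$ is unbiased, and since $\psi(x)=x^2+\tau(\psi(x))$ with $\tau$ bounded forces $\psi''(x)\to 2$ as $x\to\infty$, the bias of $\check\lambda_g$ equals minus half the expectation of a product of two factors that are both small.

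The second step quantifies both factors. Conditioning on $\Lambda_g\defin(\Lambda_g^{(c)})_{c\in C}$ and setting $Y_c\defin\nu_c\Lambda_g^{(c)}$, I get $E[X_{g,c}\mid\Lambda_g]=\varphi(Y_c)$ and $\operatorname{Var}(X_{g,c}\mid\Lambda_g)=\tau(Y_c)$. Writing $\bar\varphi\defin\frac{1}{m}\sum_c\varphi(Y_c)$ and $\xi\defin X_{g,\ast}-\bar\varphi$, the centered $\xi$ has conditional mean zero and variance $O(1/m)$, so $\xi$ is of order $m^{-1/2}$. A second-order Taylor expansion of $\psi$ around $\bar\varphi$ reduces in conditional expectation to $\psi(\bar\varphi)+O(1/m)$, and a parallel expansion of each $\varphi(Y_c)$ around $\bar Y\defin\frac{1}{m}\sum_c Y_c$, combined with $\psi(\varphi(\bar Y))=\bar Y$ and $\psi'(\varphi(\bar Y))=1/\varphi'(\bar Y)$, yields
\[
\psi(\bar\varphi)=\bar Y+\frac{\varphi''(\bar Y)}{2\varphi'(\bar Y)}\cdot\frac{1}{m}\sum_c(Y_c-\bar Y)^2+\cdots
\]
Because $E[\bar Y]=\lambda_g$, taking full expectation gives $E[\hat\lambda_g-\psi(X_{g,\ast})]=B(\mathcal L)+O(m^{-1/2})$ for an $\mathcal L$-dependent quantity $B(\mathcal L)$ built from the variance of $Y_c$; analogously $2-\psi''(X_{g,\ast})$ is small and $\mathcal L$-dependent through $\tau''\circ\psi$. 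Multiplying the two factors yields a residual bias of second order in these $\mathcal L$-dependent quantities, together with a stochastic error of order $m^{-1/2}$.

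The hardest part is controlling the Taylor remainders: the cancellation between the two factors above is exact only to leading order, and the higher-order terms involve third derivatives of $\psi$ and $\varphi$ integrated against unspecified tails of $\mathcal L$, which is precisely why the proposition is announced as only a non-rigorous argument. The statements for $\check\nu_c$ and $\check\mu_{g,c}$ follow by the same scheme. Indeed $\tilde\nu_c$ is the column-analogue of $\check\lambda_g$ obtained by summing over genes rather than cells, so the identical Taylor chain yields a bias depending on $\mathcal L$ and a stochastic error of order $n^{-1/2}$, which is within the announced $O(m^{-1/2})$ since $n\gg m$ in practice; the normalisation by $\tilde\nu_\ast=1+O(n^{-1/2})$ preserves the order; and $\check\mu_{g,c}=\check\lambda_g\check\nu_c$ inherits the announced bound because the cross term in the product of the two expansions is of strictly smaller order than either factor.
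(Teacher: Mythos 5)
Your argument is correct at the same heuristic level as the paper's, but it follows a genuinely different decomposition. The paper works forward from the target: it writes $\lambda_g \approx \frac{1}{m}\sum_c \nu_c\Lambda_g^{(c)} = \frac{1}{m}\sum_c \psi(\varphi(\nu_c\Lambda_g^{(c)}))$, Taylor-expands $\psi$ around the conditional mean $\tilde X_g$ of $X_{g,\ast}$ so that the linear term cancels on averaging, and then shows that each ingredient of $\check\lambda_g$ (namely $X_{g,\ast}$ for $\tilde X_g$, and $\frac{m-1}{m}S_{g,\ast}^2$ minus the $\tau\circ\psi$ correction for $W_g$) estimates its target up to $O(m^{-1/2})$ plus moments of $\mathcal L$; this route motivates why each term of the estimator is present. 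You instead collapse the estimator first, via the exact identity $\frac{m-1}{m}S_{g,\ast}^2=\hat\lambda_g-X_{g,\ast}^2$ (valid because $X_{g,c}^2=R_{g,c}$), to
\[
\check\lambda_g=\psi(X_{g,\ast})+\tfrac{1}{2}\psi''(X_{g,\ast})\bigl[\hat\lambda_g-\psi(X_{g,\ast})\bigr]=\hat\lambda_g-\tfrac{1}{2}\bigl(2-\psi''(X_{g,\ast})\bigr)\bigl[\hat\lambda_g-\psi(X_{g,\ast})\bigr],
\]
and then exploit the exact unbiasedness of $\hat\lambda_g$ so that the entire bias is the expectation of a single correction term, which you control with essentially the same conditional-expansion machinery. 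This is shorter, makes the relation to the average estimator transparent (for large counts $\psi''\to 2$ and $\check\lambda_g\to\hat\lambda_g$), and cleanly isolates the $O(m^{-1/2})$ fluctuation inside $\hat\lambda_g-\lambda_g$; what it buys less of is the paper's stated purpose, namely exhibiting the correction terms as ``the right ones'' to minimize bias, since in your form the design rationale is hidden inside the product. Two small imprecisions, neither fatal: $2-\psi''(x)=-\frac{\mathd^2}{\mathd x^2}\tau(\psi(x))$ is a deterministic function that equals about $2-\sqrt2\approx 1.41$ near the origin, so it is bounded rather than uniformly small, and the smallness of the bias must come from the other factor being small in the low-expression regime and from $2-\psi''$ vanishing in the high-expression regime; and $\tilde\nu_\ast$ concentrates around $\lambda_\ast$, not around $1$ --- the normalization $\check\nu_c=\tilde\nu_c/\tilde\nu_\ast$ is exactly what removes this factor, as in the paper's $\check\nu_c\approx\nu_c\lambda_\ast/(\nu_\ast\lambda_\ast)$. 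Like the paper, you also gloss over the fact that $\tilde\nu_c$ averages over genes that need not be independent (the covariance matrix $Q$ is not assumed diagonal), so the $n^{-1/2}$ claim for the column estimator is heuristic in both treatments.
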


\begin{proof}
  The first step is to approximate $\lambda_g$ with
  \[ \lambda_g = \lambda_g \nu_{\ast} \approx \frac{1}{m} \sum_{c \in C} \nu_c
     \Lambda_g^{(c)}, \]
  in fact $E (\Lambda_g^{(c)}) = \lambda_g$ and by independence the error is
  of the order $m^{- 1 / 2}$,
  \[ \left| \lambda_g - \frac{1}{m} \sum_{c \in C} \nu_c \Lambda_g^{(c)}
     \right| \lesssim \frac{1}{\sqrt{m}} . \]
  Then we write $\nu_c \Lambda_g^{(c)} = \psi (\varphi (\nu_c
  \Lambda_g^{(c)}))$ and then approximate $\psi$ with a Taylor expansion to
  the second order,
  \[ \psi (x) \approx \psi (x_0) + \psi' (x_0) \cdot (x - x_0) + \frac{1}{2}
     \psi'' (x_0) \cdot (x - x_0)^2 . \]
  If we substitute $x = \varphi (\nu_c \Lambda_g^{(c)})$ and $x_0 =
  \tilde{X}_g \defin \frac{1}{m} \sum_{c \in C} \varphi (\nu_c
  \Lambda_g^{(c)})$ and also average the whole expression for $c \in C$, then
  the linear term disappears:
  \[ \frac{1}{m} \sum_{c \in C} \nu_c \Lambda_g^{(c)} \approx \psi
     (\tilde{X}_g) + \frac{1}{2} \psi'' (\tilde{X}_g) \cdot W_g, \]
  where
  \[ W_g \defin \frac{1}{m} \sum_{c \in C} (\varphi (\nu_c \Lambda_g^{(c)}) -
     \tilde{X}_g)^2, \]
  with an error of order proportional to the third moment (skewness) of
  $\mathcal{L}$. (We remark moreover that $| \psi''' (x) | \leq K \approx
  1.206$ for all $x \geq 0$.)
  
  To use the above formulas, we will approximate $\tilde{X}_g$ with $X_{g,
  \ast}$ and $W_g$ with $S_{g, \ast}^2$, adjusted appropriately.
  
  To do so, let $\Lambda_g$ denote the vector of i.i.d.~random variables
  $(\Lambda_{g^{}}^{(c)})_{c \in C}$ and consider the following conditional
  expectations,
  \begin{align*}
    E [X_{g, c} | \Lambda_g^{}] &= \varphi (\nu_c \Lambda_g^{(c)})\\
    E [X_{g, c}^2 | \Lambda_g^{}] &= \nu_c \Lambda_g^{(c)}\\
    \operatorname{Var} [X_{g, c} | \Lambda_g^{}] &= \nu_c \Lambda_g^{(c)} - \varphi
    (\nu_c \Lambda_g^{(c)})^2 = \tau (\nu_c \Lambda_g^{(c)})\\
    E [X_{g, \ast} | \Lambda_g] &= \frac{1}{m} \sum_{c \in C} \varphi
    (\nu_c \Lambda_g^{(c)}) = \tilde{X}_g\\
    \operatorname{Var} [X_{g, \ast} | \Lambda_g] &= \frac{1}{m^2} \sum_{c \in C}
    \operatorname{Var} [X_{g, c} | \Lambda_g^{(c)}] = \frac{1}{m^2} \sum_{c \in C}
    \tau (\nu_c \Lambda_g^{(c)}) \leq \frac{L}{m} .
  \end{align*}
  Hence we get immediately that $X_{g, \ast}$ approximates $\tilde{X}_g$ with
  an error of order $m^{- 1 / 2}$,
  \[ | \tilde{X}_g - X_{g, \ast} | \lesssim \frac{1}{\sqrt{m}} . \]
  Finally we need to approximate $W_g$. We start from $S_{g, \ast}^2$:
  \begin{multline*}
    E [(X_{g, c} - X_{g, \ast})^2 | \Lambda_g]  = E [X_{g, c} - X_{g, \ast}
    | \Lambda_g]^2 + \operatorname{Var} [X_{g, c} - X_{g, \ast} | \Lambda_g]\\
    =(\varphi (\nu_c \Lambda_g^{(c)}) - \tilde{X}_g)^2 + \operatorname{Var}
    [X_{g, c} | \Lambda_g] + \frac{1}{(m - 1)^2} \sum_{c' \neq c} \operatorname{Var}
    [X_{g, c'} | \Lambda_g]\\
    =(\varphi (\nu_c \Lambda_g^{(c)}) - \tilde{X}_g)^2 + \tau (\nu_c
    \Lambda_g^{(c)}) + \frac{1}{(m - 1)^2} \sum_{c' \neq c} \tau (\nu_{c'}
    \Lambda_g^{(c')}) .
  \end{multline*}
  Averaging for $c \in C$ (with denominator $m - 1$) yields,
  \[
  \begin{split}
    E [S_{g, \ast}^2 | \Lambda_g] &= \frac{1}{m - 1} \sum E [(X_{g, c} -
    X_{g, \ast})^2 | \Lambda_g]\\
    &= \frac{1}{m - 1} \sum_{c \in C} (\varphi (\nu_c \Lambda_g^{(c)}) -
    \tilde{X}_g)^2 + \frac{1}{m - 1} \sum_{c \in C} \tau (\nu_c
    \Lambda_g^{(c)}),
  \end{split}
\]
  and hence we do the following approximation, with an error of order $m^{- 1
  / 2}$.
  \[ W_g \approx \frac{m - 1}{m} S_{g, \ast}^2 - \frac{1}{m} \sum_{c \in C}
     \tau (\nu_c \Lambda_g^{(c)}) . \]
  The last term cannot be consistently estimated, and neither can one use
  Bayesian estimation, since the distribution of $\nu_c \Lambda_g^{(c)}$ is
  completely unknown, so we resort to
  \[ \frac{1}{m} \sum_{c \in C} \tau (\nu_c \Lambda_g^{(c)}) \approx \tau
     \left( \psi \left( \frac{1}{m} \sum_{c \in C} \varphi (\nu_c
     \Lambda_g^{(c)}) \right) \right) = \tau (\psi (\tilde{X}_g)) \approx \tau
     (\psi (X_{g, \ast})) \]
  which is reasonable in the regions of linearity of $\tau \circ \psi$, so
  both in the approximate range $[0, 1]$ and above about $4$, which are most
  common for scRNA-seq data.
  
  Finally we get the estimate,
  \begin{multline*}
    \lambda_g = \lambda_g \nu_{\ast} \approx \frac{1}{m} \sum_{c \in
      C} \nu_c     \Lambda_g^{(c)} \\
    \approx \psi (X_{g, \ast}) + \frac{1}{2} \psi'' (X_{g,
     \ast}) \cdot \left[ \frac{m - 1}{m} S_{g, \ast}^2 - \psi (X_{g, \ast}) +
     X_{g, \ast}^2 \right] \indef \check{\lambda}_g . 
 \end{multline*}
 The method for $\check{\nu}_c$ is analogous. In fact, one could reproduce
  the same passages to get
  \[ \nu_c \lambda_{\ast} \approx \frac{1}{n} \sum_{g \in G} \nu_c
     \Lambda_g^{(c)} \approx \ldots = \tilde{\nu}_c \]
  by which
  \[ \check{\nu}_c \approx \frac{\nu_c \lambda_{\ast}}{\nu_{\ast}
     \lambda_{\ast}} = \nu_c . \]
  The result for $\check{\mu}_{g, c}$ follows.
\end{proof}

\section{Probability of zero reads}

\label{s:probzero}In this section we want to build on the estimates of
$\mu_{g, c}$ introduced in Section~\ref{s:estimation} to get an estimate of
the probability that $R_{g, c} = 0$. In what follows the symbol
$\tilde{\mu}_{g, c}$ denotes either the average or the square-root estimator
of $\mu_{g, c}$.

\begin{proposition}
  The probability of zero read counts can be expressed as
  \[ P (R_{g, c} = 0) = e^{- \eta_g (\mu_{g, c})} \]
  where
  \[ \eta_g (x) \defin - \log E [e^{- x \Lambda_g^{(c)} / \lambda_g}] = -
     \log \int e^{- xt / \lambda_g} \mathd \mathcal{L}_g (t) . \]
\end{proposition}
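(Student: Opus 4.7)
The plan is to reduce the statement to a one-line computation using the tower property of conditional expectation and the elementary fact that a Poisson random variable of mean $m$ takes the value $0$ with probability $e^{-m}$.

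First, I would condition on $\Lambda_g^{(c)}$. Since $R_{g,c}\mid\Lambda_g^{(c)}\sim\operatorname{Poisson}(\nu_c\Lambda_g^{(c)})$, the conditional probability of a zero count is $e^{-\nu_c\Lambda_g^{(c)}}$. Applying the tower property gives
\[
P(R_{g,c}=0)=E\bigl[P(R_{g,c}=0\mid\Lambda_g^{(c)})\bigr]=E\bigl[e^{-\nu_c\Lambda_g^{(c)}}\bigr].
\]

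Second, I would rewrite the exponent in terms of $\mu_{g,c}$. By equation~\eqref{e:mu-gc} one has $\nu_c=\mu_{g,c}/\lambda_g$, so the exponent becomes $-(\mu_{g,c}/\lambda_g)\Lambda_g^{(c)}$. Recognising the resulting expectation as the Laplace transform appearing in the definition of $\eta_g$, evaluated at $x=\mu_{g,c}$, we obtain
\[
P(R_{g,c}=0)=E\bigl[e^{-\mu_{g,c}\Lambda_g^{(c)}/\lambda_g}\bigr]=e^{-\eta_g(\mu_{g,c})},
\]
which is the claim.

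There is essentially no obstacle here: the only small points worth checking are that $\Lambda_g^{(c)}\ge 0$ almost surely so that the Laplace transform is well defined and finite for $x\ge 0$, and that $\lambda_g>0$ so that the substitution $\nu_c=\mu_{g,c}/\lambda_g$ is legitimate. Both are consequences of the standing model assumptions (positivity of $\mathcal L_g$ and of the expectation). Thus the proposition is a direct consequence of the definition of $\eta_g$ and the conditional Poisson structure of the model.
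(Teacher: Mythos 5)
Your proposal is correct and follows essentially the same route as the paper: condition on $\Lambda_g^{(c)}$, use the tower property together with the Poisson zero-probability $e^{-\nu_c\Lambda_g^{(c)}}$, and identify the resulting expectation with $e^{-\eta_g(\nu_c\lambda_g)}=e^{-\eta_g(\mu_{g,c})}$. Your added remarks on well-definedness (nonnegativity of $\Lambda_g^{(c)}$ and positivity of $\lambda_g$) are a harmless refinement of the paper's one-line argument.
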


Here $\eta_g$ is the log-mgf of the law $\mathcal{L}_g$ of $\Lambda_g^{(c)}$
(which does not depend on $c$) rescaled by its mean $\lambda_g$.

\begin{proof}
  By conditioning on $\Lambda_g^{(c)}$, and using the conditional Poisson
  distribution of $R_{g, c}$, we get,
  \[ P (R_{g, c} = 0) = E [E [\mathbbm{1}_{R_{g, c} = 0} | \Lambda_g^{(c)}]] =
     E [e^{- \nu_c \Lambda_g^{(c)}}] \indef e^{- \eta_g (\nu_c
     \lambda_g)} = e^{- \eta_g (\mu_{g, c})} . \qedhere \]
\end{proof}

In full generality we cannot determine the functions $\eta_g$ for the
different genes, because the distributions $\mathcal{L}_g$'s are unknown;
nevertheless some properties of log-mgfs are universal, in particular $\eta_g$
starts from the origin, is monotone increasing, concave and has derivative $1$
in 0.

Instead of trying to estimate $\eta_g (x)$, we choose to model it with a
universal one-parameter family $(f_a)_{a \in \mathbb{R}} $ of functions $f_a :
\mathbb{R}_+ \rightarrow \mathbb{R}_+$ with the same properties: $f_a (0) =
0$, $f_a' (0) = 1$, $f_a$ monotone increasing and concave.

A simple, natural choice, based on $\log (1 + x)$, is $\frac{1}{a} \log (1 +
ax)$ for $a > 0$, which we choose to extend with continuity to
\begin{equation}
  \label{e:def-fa} f_a (x) \defin \left\{\begin{array}{ll}
    \frac{1}{a} \log (1 + ax), & a > 0\\
    (1 - a) x & a \leq 0.
  \end{array}\right.
\end{equation}
\begin{remark}
  This model, for $a > 0$, corresponds to the gamma distribution with shape
  parameter $a^{- 1}$, so we are implicitly making the assumption that
  (dropping the dependence on $g$) $\Lambda \sim \text{gamma} (a^{- 1}, a
  \lambda)$, so that $E (\Lambda) = \lambda$, $\operatorname{Var} (\Lambda) = a
  \lambda^2$, and that the read counts $R$ are negative binomial with $E (R) =
  \nu \lambda \indef \mu$ and $\operatorname{Var} (R) = \mu + a \mu^2$, see
  equation~{\eqref{e:Var-R}}.
\end{remark}

We would like to use this model to infer, for any gene $g$, some value $a (g)
\geq 0$, to which there corrisponds a reasonable estimate of the
probability of zero reads in a cell $c$, and to do so we impose the condition
that the marginal \emph{expected} number of zeros for gene~$g$ equals the
marginal \emph{observed} number of zeros:
\begin{equation}
  \label{e:a-condition} \sum_{c \in C} e^{- f_{a (g)} (\tilde{\mu}_{g, c})} =
  \sum_{c \in C} \mathbbm{1} (R_{g, c} = 0),
\end{equation}
and solve this equation for $a (g)$. We remark that here $\tilde{\mu}_{g, c}$
denotes either the average or the square-root estimator of $\mu_{g, c}$.

\begin{definition}
  \label{d:rho}We call {\emph{chance of expression}} of gene $g$ in cell $c$,
  the quantity $\rho_{g, c} \defin 1 - e^{- f_{a (g)} (\tilde{\mu}_{g, c})}$
  with $a (g)$ which solves condition~{\eqref{e:a-condition}}.
\end{definition}

The following remarks and proposition will clarify that this is both a good
definition and a reasonable one and that $\rho_{g, c} \approx P (R_{g, c}
\geq 1)$.

\begin{remark}
  Condition~{\eqref{e:a-condition}} is both necessary and
  natural. Necessary because for arbitrary $a$ the quantities on left
  and right-hand side are typically very different, and would yield
  false positives in the tests we will be performing. Natural because
  it is completely analogous to what is done for classical contingency
  tables, where the unknown probabilities of the categories are
  estimated from the proportions of the observed marginals, in such a
  way that the analogous of equation~{\eqref{e:a-condition}} holds.
\end{remark}

\begin{remark}
  We had to extend the definition of equation~{\eqref{e:def-fa}} to the
  negative values of $a$ in order to be able to solve
  equation~{\eqref{e:a-condition}} for all samples. In fact since
  $\tilde{\mu}_{g, c}$ and $R_{g, c}$ are both random variables, it may well
  be that for some genes $g$,
  \[ \sum_{c \in C} e^{- \tilde{\mu}_{g, c}} > \sum_{c \in C}
    \mathbbm{1} (R_{g, c} = 0), \] and in that case no positive value
  of $a(g)$ will satisfy condition {\eqref{e:a-condition}}. (In fact,
  in our synthetic datasets this happened for 5\% to 30\% of the
  genes. See Section~\ref{s:synthetic}.)

  The choice of $(1 - a) x$ is a simple, natural family of maps that
  extend with continuity the definition given for $a > 0$ to the
  ``forbidden'' region of the plane.  The interpretation is that in
  these cases $\tilde{\mu}_{g, c}$ could be underestimating
  $\mu_{g, c}$ and hence
  $f_{a (g)} (\tilde{\mu}_{g, c}) = (1 - a) \tilde{\mu}_{g, c} >
  \tilde{\mu}_{g, c}$ may correct the error in a suitable way.
\end{remark}

The following statement shows that $a (g)$ can be computed numerically with
ease, for example by bisection, for each gene $g \in G$.

\begin{proposition}
  The value $a (g)$ such that condition~{\eqref{e:a-condition}} holds, is
  always uniquely determined as long as $\sum_{c \in C} R_{g, c} > 0$.
\end{proposition}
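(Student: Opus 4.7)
The plan is to encode condition~\eqref{e:a-condition} as the scalar equation $F_g(a) = N_g$, where
\[ F_g(a) \defin \sum_{c \in C} e^{-f_a(\tilde{\mu}_{g,c})}, \qquad N_g \defin \sum_{c \in C} \mathbbm{1}(R_{g,c}=0), \]
and to prove existence and uniqueness of the solution by showing that $F_g$ is continuous, strictly monotone in $a$, and that $N_g$ lies in its range. Continuity is immediate because the two branches of $f_a$ in~\eqref{e:def-fa} agree at $a=0$ (since $a^{-1}\log(1+ax) \to x$ as $a \to 0$), so $a \mapsto f_a(x)$ is continuous for each fixed $x \geq 0$, and summing preserves continuity.

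The heart of the argument is strict monotonicity. I would show that $a \mapsto f_a(x)$ is strictly decreasing for every $x > 0$, so that $a \mapsto e^{-f_a(x)}$ is strictly increasing and hence $F_g$ itself is strictly increasing as soon as at least one $\tilde{\mu}_{g,c}$ is positive, which is guaranteed by $\sum_c R_{g,c} > 0$ via the explicit estimator formulas. For $a \leq 0$ the branch $(1-a)x$ has $a$-derivative $-x < 0$. For $a > 0$ a short calculation gives
\[ \partial_a\!\left[\frac{\log(1+ax)}{a}\right] = -\frac{1}{a^2}\!\left[\log(1+ax) - \frac{ax}{1+ax}\right], \]
which is negative because $u \mapsto \log(1+u) - u/(1+u)$ vanishes at $u=0$ and has derivative $u/(1+u)^2 > 0$ for $u>0$.

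It remains to identify the range of $F_g$ and locate $N_g$ inside it. The limits $\lim_{a \to -\infty} F_g(a) = |C_0|$ and $\lim_{a \to +\infty} F_g(a) = m$, with $C_0 \defin \{c : \tilde{\mu}_{g,c} = 0\}$, follow at once from $(1-a)x \to \infty$ and $a^{-1}\log(1+ax) \to 0$ respectively, for $x > 0$. The hypothesis $\sum_c R_{g,c} > 0$ forces $N_g \leq m-1$, and the implication ``$\tilde{\mu}_{g,c} = 0 \Rightarrow R_{g,c} = 0$'', which is clear from the explicit formulas for $\hat{\mu}$ and $\check{\mu}$, forces $N_g \geq |C_0|$. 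Hence $N_g$ lies in the range of $F_g$, and combining strict monotonicity with the intermediate value theorem delivers a unique $a(g) \in \mathbb{R}$, which also justifies the bisection procedure alluded to just before the proposition. The only delicate point I foresee is the borderline case $N_g = |C_0|$, in which the equation would only be solved in the limit $a \to -\infty$; this corresponds to gene $g$ being absent exactly on the empty cells, a degenerate situation presumably excluded by convention.
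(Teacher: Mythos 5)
Your proof is correct and follows essentially the same strategy as the paper's: show that $a \mapsto f_a(x)$ is strictly decreasing (via the same derivative computation and the same inequality $\log(1+u) > u/(1+u)$), deduce strict monotonicity of the sum from $\tilde{\mu}_{g,c}>0$ for at least one $c$, and conclude by continuity and the intermediate value theorem. Where you genuinely improve on the paper is in the limit analysis: the paper asserts $\lim_{a\to-\infty}\tau(a)=0$ and $\lim_{a\to+\infty}\tau(a)=+\infty$, but since each summand $e^{-f_a(\tilde{\mu}_{g,c})}$ lies in $[0,1]$ and equals $1$ identically when $\tilde{\mu}_{g,c}=0$, the correct limits are $|C_0|$ and $m$ as you state, so $\tau$ is a bijection onto $(|C_0|,m)$ rather than onto $\mathbb{R}_+$. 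Your observation that one must then check $|C_0| \le N_g \le m-1$, and your flagged borderline case $N_g=|C_0|$ (e.g.\ a gene with nonzero counts in every cell that has any counts at all, giving $N_g=|C_0|=0$), identify a real gap in the proposition as stated: in that degenerate situation no finite $a(g)$ exists, and the paper does not exclude it. So your argument is not just equivalent but slightly more rigorous than the published one.
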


\begin{proof}
  We will prove that the map $\tau : \mathbb{R} \rightarrow \mathbb{R}_+$
  defined by
  \[
    \tau (a) \defin \sum_{c \in C} e^{- f_a (\tilde{\mu}_{g, c})}
  \]
  is a bijection under the hypothesis.
  
  A direct computation shows that $f_a (x)$ is monotone decreasing in $a$ for
  all $x > 0$. In fact,
  \[ \partial_a f_a (x) = \left\{\begin{array}{ll}
       \frac{1}{a^2} \left[  \frac{ax}{1 + ax} - \log (1 + ax) \right] & a >
       0\\
       - x & a < 0
     \end{array}\right. \]
  and above formula is always negative, since
  \[ - \log (1 + ax) = \log \left( 1 - \frac{ax}{1 + ax} \right) < -
     \frac{ax}{1 + ax}, \]
  moreover $f_a (x)$ is continuous in $a$ for $a = 0$. We deduce immediately
  that $\tau$ is monotone increasing as long as $\tilde{\mu}_{g, c} > 0$ for
  some $c \in C$. The condition is true both for average and for square-root
  estimators if $R_{g, c} > 0$ for some $c \in C$. Finally it is trivial to
  observe that $\lim_{a \rightarrow - \infty} \tau (a) = 0$ and $\lim_{a
  \rightarrow + \infty} \tau (a) = + \infty$, so $\tau$ is a bijection.
\end{proof}

\section{Co-expression tables}

\label{s:tables}In this section we present a completely new tool for
measuring and testing the co-expression of two genes, and introduce
two useful statistical methods which considerably extend its scope.

Co-expression is a meaningful concept when the population of cells is
not completely homogeneous, because in that case each gene is assumed
to be independently expressed in all the cells of the sample (so that
$Q$ is supposed to be diagonal, see Section~\ref{s:model}), and hence
the read counts $R_{g,c}$ are all independent random variables.

In the case of non-homogeneous population, we assume that different
cell types can be found in the sample, each type with different genes
expressed, and hence two genes could have positive read counts in the
same cells more (or less) often that should be expected if the
population was homogeneous. Therefore genes co-expression can be a
powerful yet indirect tool to infer cell type
profiles~\cite{coex2018}.

Our approach to assess co-expression builds on the assumption that
cell differentiation will typically shun to zero the expression of
several genes and that most genes have so low expression at the single
cell level that measuring fold change is not very informative.

Based on this assumption, our main test compares the number of cells with zero
read count in couples of genes (jointly versus marginally), in a way similar
to $2 \times 2$ contingency tables, but generalized to experimental units with
different efficiency.

\begin{definition}
  For any pair of genes $g_1, g_2 \in G$, their \emph{co-expression
    table} is a contingency table of the form
  \[ \begin{array}{cc|c}
       O_{1, 1} & O_{1, 0} & O_{1, \kreuz}\\
       O_{0, 1} & O_{0, 0} & O_{0, \kreuz}\\
       \hline
       O_{\kreuz, 1} & O_{\kreuz, 0} & m
     \end{array} \]
  where $O_{1, 1}$ is the number of cells with non-zero read count for both
  genes, $O_{1, 0}$ is the number of cells with non-zero read count for $g_1$
  and zero read count for $g_2$ and so on,
  \begin{equation}
    \label{e:def-Oij} O_{i, j} \defin \# \{ c \in C \text{such that} i
    =\mathbbm{1} (R_{g_1, c} \geq 1) \text{and} j =\mathbbm{1} (R_{g_2, c}
    \geq 1) \}
  \end{equation}
  and where the marginals are as usual the sums of rows and columns and we
  recall that $m =\#C$ is the total number of cells,
  \begin{gather*}
    O_{i, \kreuz} \defin  O_{i, 1} + O_{i, 0}
    ,\qquad
    O_{\kreuz, j} \defin  O_{1, j} + O_{0, j}\\
    m =  O_{\kreuz, 1} + O_{\kreuz, 0} = O_{1, \kreuz} + O_{0, \kreuz} .
  \end{gather*}
\end{definition}

\begin{remark}
  We stress that all the information on how large is $R_{g, c}$ is
  ignored. We consider only the two cases $R_{g, c} = 0$ and
  $R_{g, c} \geq 1$. In principle this may be a weakness of this
  approach, but one should recall that very few genes have high
  counts, and this method is particularly suited to deal with low
  espressions and small integer counts which are typical in scRNA-seq
  databases.
\end{remark}

\subsection{Classical contingency tables}

The naive approach with classical contingency tables does not work for our
proposed scRNA-seq model, because the variability of efficiency $\nu_c$
between cells creates spurious correlation.

Consider for example the co-expression table below, relative to two
\emph{constitutive} genes (which, as such, should be expressed in
\emph{all} cells),
\[ \begin{array}{cc|c}
     705 & 4 & 709\\
     654 & 16 & 670\\
     \hline
     1359 & 20 & 1379
   \end{array} \]
The marginals of gene $g_1$ are $O_{1, \kreuz} = 709$ and $O_{0, \kreuz} =
670$, with a ratio $\frac{O_{0, \kreuz}}{m} = \frac{670}{1379} \approx
\frac{1}{2}$, showing that in 1 cell out of 2 there are zero read counts for this
gene. Despite the fact that gene $g_1$ should be certainly expressed in all
cells, this can be explained because of the combination of low biological
expression and low extraction efficiency. Something analogous happens for gene
$g_2$, with a corresponding ratio of about $\frac{1}{70}$.

These ratios suggest that any cell has a probability of $\frac{1}{2}$ of
having zero read count of $g_1$ and a probability of $\frac{1}{70}$ of having
zero read count of $g_2$. These two events would be independent if all cells
had the same extraction efficiency: together with the independence of RNA
fragments extraction, this would yield the expected read counts of
classical contingency tables; for example $1379
\cdot \frac{1}{2} \cdot \frac{1}{70} \approx 9.7$ and similarly,
\[ \begin{array}{cc|c}
     698.7 & 10.3 & 709\\
     660.3 & 9.7 & 670\\
     \hline
     1359 & 20 & 1379
   \end{array} \]
 Since $4$ is quite far from $10.3$, giving alone a $2 \sigma$
 deviation from the null hypothesis, the classical contingency table analysis
 would give high significance to the false hypothesis that the two constitutive
 genes are positively co-expressed, suggesting that there are at least two different
 categories of cells: cells in which both are expressed and cells where neither
 is expressed.

What's really happening is that there are cells with high efficiency and cells
with low efficiency. While the matematical model of contingency tables builds
on the assumption that all experimental units are identically distributed,
this does not hold in the case of scRNA-seq data.

\subsection{Expected counts in co-expression tables}

The definition of the {\emph{observed}} cells $O_{i, j}$ given by
equation~{\eqref{e:def-Oij}} can be rewritten more succintly as
\begin{equation}
  \label{e:def-Oij-2} O_{i, j} = \sum_{c \in C} \mathbbm{1} (R_{g_1, c}
  \geq 1)^i \cdot \mathbbm{1} (R_{g_1, c} = 0)^{1 - i} \cdot \mathbbm{1}
  (R_{g_2, c} \geq 1)^j \cdot \mathbbm{1} (R_{g_2, c} = 0)^{1 -
    j}.
\end{equation}
for $i,j\in\{0,1\}$.

Since we are going to build a statistical test for the independence of
expression of the two genes, we put ourselves in the null hypothesis, and
deduce that the {\emph{expected}} number of cells under independence $\epsilon_{i, j}
\defin E_{H_0} (O_{i, j})$ is
\[ \epsilon_{i, j} = \sum_{c \in C} P (R_{g_1, c} \geq 1)^i P (R_{g_1, c} =
   0)^{1 - i} P (R_{g_2, c} \geq 1)^j P (R_{g_2, c} = 0)^{1 - j} . \]
By Definition~\ref{d:rho}, $\epsilon_{i, j}$ can be estimated using the chance of
expression of the genes in each cell.

\begin{definition}
  For any pair of genes $g_1, g_2 \in G$, their {\emph{table of expected cell
  counts under the hypothesis of independence}} (``table of expected'' for
  short) is given by
  \begin{equation}
    \label{e:def-Etij} \tilde{\epsilon}_{i, j} \defin \sum_{c \in C} \rho_{g_1, c}^i
    (1 - \rho_{g_1, c})^{1 - i} \rho_{g_2, c}^j  (1 - \rho_{g_2, c})^{1 - j},
    \qquad i, j \in \{ 0, 1 \}
  \end{equation}
  where $\rho_{g, c}$ denotes the chance of expression of gene $g$ in cell
  $c$.
\end{definition}

The values $O_{i, j}$ and $\tilde{\epsilon}_{i, j}$ for $i, j \in \{ 0, 1 \}$ given
by equations~{\eqref{e:def-Oij}} or~{\eqref{e:def-Oij-2}}, and
equation~{\eqref{e:def-Etij}} define the tables of observed and expected
cells, with the property that marginals are the same, thanks to
condition~{\eqref{e:a-condition}}.

For example, for the experiment with the two constitutive genes presented
above, the table with the expected cells (average estimators) is the
following,
\[ \begin{array}{cc|c}
     703.6 & 5.4 & 709\\
     655.4 & 14.6 & 670\\
     \hline
     1359 & 20 & 1379
   \end{array} \]
As can be seen, these values are much closer to the observed.

\subsection{Co-expression estimator and test}

The interpretation of the co-expression table is now performed in a
way similar to usual contingency tables, with a test for independence
of expression based on the $\chi^2 (1)$ distribution, and with an
additional \emph{co-expression index}, based on the same framework and
similar in principle to a classical correlation.

We stress that this is an approximate test, and one cannot prove in
full generality that the $\chi^2(1)$ distribution is exactly correct
for the statistics under the null hypothesis. However we used the
synthetic datasets described in Section~\ref{s:synthetic} to get the
empirical distribution of the $\chi^2(1)$ $p$-value and found it in
good accordance with the theory, which prescribes uniform
distribution. Figure~\ref{f:coex_test_fp} shows the result.

On the other hand the statistical power of this test emerges from the
direct application to real data, which will be detailed in the twin
paper.

\begin{definition}
  Given two genes with co-expression table $(O_{i, j})_{i, j = 0, 1}$ and
  table of expected $(\tilde{\epsilon}_{i, j})_{i, j = 0, 1}$, the statistics of the
  test for the independence of expression is
  \[ W \defin \sum_{i, j = 0}^1 \frac{(O_{i, j} - \tilde{\epsilon}_{i, j})^2}{1 \vee
     \tilde{\epsilon}_{i, j}} . \]
  The co-expression index is
  \[ R \defin \biggl( \sum_{i, j = 0}^1 \frac{1}{1 \vee \tilde{\epsilon}_{i, j}}
     \biggr)^{- 1 / 2} \cdot \sum_{i, j = 0}^1 (- 1)^{i + j}  \frac{O_{i, j} -
     \tilde{\epsilon}_{i, j}}{1 \vee \tilde{\epsilon}_{i, j}} \].
\end{definition}

The statistics $W$ is defined in analogy with the traditional
contingency tables, with a regularizing correction at the denominator
to take into account the fact that $\tilde{\epsilon}_{i, j}$ could be
much smaller than 1 for some low expression genes. In this way those
cases will not become false positives.

The co-expression index $R$ is defined in such a way that
$| R | = \sqrt{W}$ with the sign that encodes the direction of the
deviation from independence, so it will be positive when the genes are
positively co-expressed, negative in the opposite case, and
$\mathcal N(0,1)$-distributed when there is independence.

\begin{figure}[ht]
  \includegraphics[width=\textwidth]{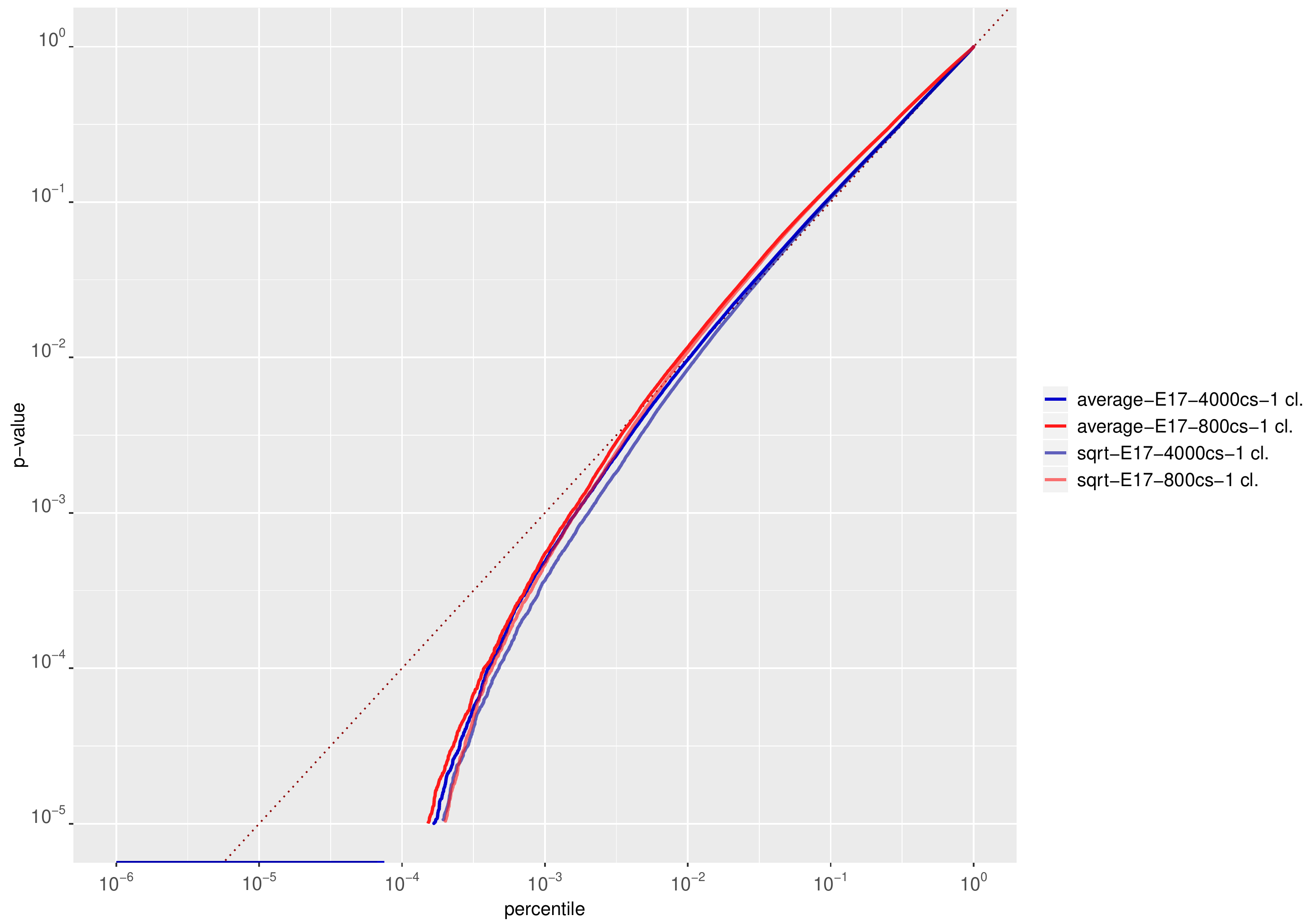}
  \caption{Empirical distribution of $p$-value computed with
    $\chi^2(1)$ quantiles.  To be under the null hypothesis, we used
    the four 1-cluster synthetic datasets (see
    Section~\ref{s:synthetic}); we performed co-expression tests for
    all pairs of genes, and then randomly sampled a subset of $10^6$
    points to plot the lines. The plot proves that these tests have
    the correct incidence of false positive for all significance
    values greater than about $0.005$.}
  \label{f:coex_test_fp}
\end{figure}

\begin{proposition}
  Given two genes with co-expression table $(O_{i, j})_{i, j = 0, 1}$ and
  table of expected $(\tilde{\epsilon}_{i, j})_{i, j = 0, 1}$, define for $i, j \in
  \{ 0, 1 \}$,
  \[
    Z_{i, j} \defin \frac{O_{i, j} - \tilde{\epsilon}_{i, j}}{\sqrt{1
        \vee \tilde{\epsilon}_{i, j}}}
    \qquad\text{and}\qquad
    v_{i, j} \defin \frac{(- 1)^{i + j}}{\sqrt{1
        \vee \tilde{\epsilon}_{i, j}}}
  \]
and let $W$ and $R$ be as above. Then $R =
  \frac{v}{\| v \|} \cdot Z$, and $W = \| Z \|^2$ by definition and in the
  vector space $\mathbb{R}^4$, $Z$ and $v$ have the same direction, so $R^2 =
  W$.
  
  If the components of $Z$ are supposed to be standard Gaussian, independent
  but conditioned on the values of the marginals of the tables, then $R$ is
  standard Gaussian and $W$ is a chi-square with 1 degree of freedom.
\end{proposition}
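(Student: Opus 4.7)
The deterministic identities $R=(v/\|v\|)\cdot Z$ and $W=\|Z\|^2$ are immediate on unpacking the definitions: the normalizing prefactor in the definition of $R$ is exactly $\|v\|^{-1}$, and the sum defining $W$ is literally $\sum_{i,j}Z_{i,j}^2$. So the substantive content of the first half of the proposition is the parallelism $Z\parallel v$. I would establish it by first observing that condition~\eqref{e:a-condition}, combined with Definition~\ref{d:rho} and the definition \eqref{e:def-Etij} of the $\tilde{\epsilon}_{i,j}$, forces the marginals of the expected table to agree with those of the observed table: $\tilde{\epsilon}_{i,\kreuz}=O_{i,\kreuz}$ and $\tilde{\epsilon}_{\kreuz,j}=O_{\kreuz,j}$. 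A one-line check verifies this, since for instance $\tilde{\epsilon}_{1,\kreuz}=\sum_c\rho_{g_1,c}$ equals $O_{1,\kreuz}$ by the defining relation of $a(g_1)$. Consequently the residual matrix $D_{i,j}\defin O_{i,j}-\tilde{\epsilon}_{i,j}$ has all row and column sums equal to zero, and since the space of $2\times 2$ real matrices with vanishing marginals is one-dimensional, spanned by the checkerboard pattern $((-1)^{i+j})_{i,j}$, one has $D_{i,j}=(-1)^{i+j}D$ for a single scalar $D$. Dividing through by $\sqrt{1\vee\tilde{\epsilon}_{i,j}}$ gives $Z=Dv$, whence $R=D\|v\|$, $W=D^2\|v\|^2$, and $R^2=W$.

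For the distributional claim I would argue geometrically. Interpreted literally, the hypothesis says that absent conditioning the four components of $Z$ would be iid $\mathcal{N}(0,1)$, and the conditioning on marginals restricts $Z$ to the one-dimensional subspace $L\defin\mathrm{span}(v)$ identified above. By the rotational invariance of the standard Gaussian on $\mathbb{R}^4$, the orthogonal decomposition $Z=Z_L+Z_{L^\perp}$ consists of two independent Gaussian pieces, each standard in its own subspace; conditioning on $Z_{L^\perp}=0$ leaves $Z_L$ distributed as a one-dimensional standard Gaussian on $L$. Since $R=(v/\|v\|)\cdot Z$ is precisely the signed coordinate of $Z$ along the unit vector $v/\|v\|\in L$, we conclude $R\sim\mathcal{N}(0,1)$, and consequently $W=R^2\sim\chi^2(1)$.

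The main obstacle I foresee is not really technical but interpretive: the distributional hypothesis is a modeling idealization (the $Z_{i,j}$ are discrete and bounded random variables built from integer counts), so the Gaussian step is an asymptotic heuristic rather than a theorem, a point which the paper's surrounding discussion already concedes by calling the test ``approximate.'' The genuine mathematical content of the proof is concentrated entirely in the marginal-matching identity and the one-dimensionality of the constrained residual space, both of which are clean consequences of condition~\eqref{e:a-condition} and the linear algebra of $2\times 2$ tables with fixed marginals; once these are in place the rest is orthogonal-projection bookkeeping.
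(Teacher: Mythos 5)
Your proof is correct and follows essentially the same route as the paper's: the parallelism $Z\parallel v$ via the one-dimensionality of $2\times 2$ residual tables with vanishing marginals (the paper phrases this as ``the value of any cell determines the other three'' and writes $O_{i,j}=\tilde{\epsilon}_{i,j}+(-1)^{i+j}r$ directly), followed by the same rotational-invariance/orthogonal-projection argument for the Gaussian and $\chi^2(1)$ claims. Your version is in fact slightly more explicit than the paper's in tracing the marginal-matching back to condition~\eqref{e:a-condition}, which the paper only records in the surrounding text.
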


\begin{proof}
  Firstly notice that, given the marginals, the value of any cell determines
  the other three, and the following relations hold:
  \[
       O_{0, 0} \lessgtr \tilde{\epsilon}_{0, 0} \Leftrightarrow O_{1, 1} \lessgtr
       \tilde{\epsilon}_{1, 1} \Leftrightarrow O_{0, 1} \gtrless \tilde{\epsilon}_{0, 1}
       \Leftrightarrow O_{1, 0} \gtrless \tilde{\epsilon}_{1, 0}
  \]
  hence $O_{i, j} = \tilde{\epsilon}_{i, j} + (- 1)^{i + j} r$ for some suitable $r
  \in \mathbb{R}$ not depending on $i$ and $j$. Then $Z_{i, j} = rv_{i, j}$
  and the two vectors have the same direction.
  
  For the second part of the statement, conditioning on the values of
  the marginals is equivalent to restricting $Z$ to the 1-dimensional
  subspace $\operatorname{Span} (v)$. Since the covariance matrix of
  $Z$ before conditioning is the identity, it is invariant by
  rotations and therefore $R$, which is the projection on the
  subspace, has standard Gaussian distribution, and finally
  $W = R^2 \sim \chi^2 (1)$.
\end{proof}

\begin{corollary}
  Under the null hypothesis of the test for independence of expression,
  \[ R \dot{\sim} \mathcal{N} (0, 1) \qquad \text{and}
     \qquad W \dot{\sim} \chi^2 (1) . \]
\end{corollary}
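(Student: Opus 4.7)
My plan is to deduce the Gaussian/chi-square limit from the preceding proposition, reducing everything to verifying that the standardized cell counts $Z_{i,j}$ are approximately standard Gaussian under $H_0$. First I would write $O_{i,j}=\sum_{c\in C}\xi_{i,j}^{(c)}$, where $\xi_{i,j}^{(c)}$ is a Bernoulli indicator of the joint event for cell $c$, with parameter $p_{i,j}^{(c)} \defin P(R_{g_1,c}\ge 1)^i P(R_{g_1,c}=0)^{1-i} P(R_{g_2,c}\ge 1)^j P(R_{g_2,c}=0)^{1-j}$. Under $H_0$ the two genes are independent, so these indicators are independent across $c$, and a triangular-array central limit theorem (Lindeberg--Feller) gives that $O_{i,j}$ is approximately normal with mean $\epsilon_{i,j}=\sum_c p_{i,j}^{(c)}$ and variance $\sigma_{i,j}^2=\sum_c p_{i,j}^{(c)}(1-p_{i,j}^{(c)})$.

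Next I would replace $\epsilon_{i,j}$ by $\tilde{\epsilon}_{i,j}$ in the numerator of $Z_{i,j}$. This replacement is justified because Definition~\ref{d:rho} (together with the marginal-matching condition~\eqref{e:a-condition}) ensures that $\rho_{g,c}\approx P(R_{g,c}\ge 1)$, so $\tilde{\epsilon}_{i,j}\approx\epsilon_{i,j}$. For the denominator I would note that in scRNA-seq the relevant probabilities $p_{i,j}^{(c)}$ are typically small, so $\sigma_{i,j}^2\approx\epsilon_{i,j}\approx\tilde{\epsilon}_{i,j}$; when the total $\tilde{\epsilon}_{i,j}$ is of order one or smaller, replacing the denominator by $\sqrt{1\vee\tilde{\epsilon}_{i,j}}$ is a conservative regularization that prevents rare-event blowup at the price of a small conservative bias.

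With $Z_{i,j}$ now approximately standard Gaussian, the proposition furnishes the rest: conditioning on the four marginals of the observed table leaves only one degree of freedom, collapsing $Z$ onto the one-dimensional subspace $\operatorname{Span}(v)$. By rotational invariance of the identity covariance, the projection $R=\frac{v}{\|v\|}\cdot Z$ has standard Gaussian distribution, and $W=\|Z\|^2=R^2\sim\chi^2(1)$.

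The main obstacle is controlling the uniformity of the Gaussian approximation across all regimes of $\tilde{\epsilon}_{i,j}$: in the low-count regime the Bernoulli sum is still close to Poisson rather than Gaussian, and the substitution of $1\vee\tilde{\epsilon}_{i,j}$ for the true variance is essentially a pragmatic device rather than a sharp approximation. Since a clean distributional statement is therefore not available in full generality (hence the $\dot\sim$ notation), the authors complement the analytical argument with the empirical check in Figure~\ref{f:coex_test_fp}, confirming that the $p$-value distribution matches the uniform law except in the extreme tail below roughly $0.005$.
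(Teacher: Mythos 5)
Your argument follows the same route as the paper: the paper's proof is a two-line assertion that under $H_0$ the components of $Z$ are approximately standard Gaussian and independent before conditioning on the marginals, after which the preceding proposition yields $R\dot\sim\mathcal N(0,1)$ and $W\dot\sim\chi^2(1)$. You simply make explicit what the paper leaves implicit (the Bernoulli decomposition over cells, the Lindeberg--Feller normal approximation, the substitution $\tilde\epsilon_{i,j}\approx\epsilon_{i,j}\approx\sigma_{i,j}^2$, and the acknowledged breakdown in the low-count regime), so this is a correct and faithful, if more detailed, version of the paper's own heuristic proof.
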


\begin{proof}
  Under $H_0$ we expect
  $E (O_{i, j}) \approx \tilde{\epsilon}_{i, j}$, so the components of
  $Z$ are approximately standard Gaussian, and they are independent
  before conditioning to the marginals.
\end{proof}

\subsection{Extensions}

The framework of co-expression tables allows the introduction of some
additional tools.

\subsubsection{Differential expression analysis}

When the cells $C$ are divided into $k \geq 2$ different groups, $C =
\bigcup_{j = 1}^k C_j$ (called {\emph{conditions}}), it is important to
verify, for each gene $g$, if there is a significant difference of expression
between the groups. This is a very active research field of its own,
as can be see for example in~\cite{Deseq2-2014,SCDE2014,MAST2015,decent2019}
and references therein.

In our framework this test can be done by means of an
expression/condition table, similar to the co-expression tables of the
main result, but with as first variable the gene $g$ (collapsed in
categories $\{ R_{g, c} \geq 1 \}$ and $\{ R_{g, c} = 0 \}$) and
as second variable the condition. Formally, one can define
\[ O_{i, j} \defin \# \{ c \in C_j \text{ such that } i =\mathbbm{1}
   (R_{g_{}, c} \geq 1) \}, \qquad i = 0, 1,
   \quad j = 1, 2, \ldots, k \]
and estimate the expected cell counts under the hypothesis of independence
with
\[ \tilde{\epsilon}_{i, j} \defin \sum_{c \in C_j} \rho_{g, c}^i  (1 - \rho_{g,
   c})^{1 - i}, \qquad i, j \in \{ 0, 1 \} . \]
Then the test goes on as in a classical contingency table, by the
approximation that under the null hypothesis,
\[ W \defin \sum_{i = 0}^1 \sum_{j = 1}^k \frac{(O_{i, j} - \tilde{\epsilon}_{i,
   j})^2}{1 \vee \tilde{\epsilon}_{i, j}} \dot{\sim} \chi^2 (k - 1) . \]

\subsubsection{Global differentiation index}

When the co-expression index is computed genome-wide, that is, for all pairs
of genes $(g_1, g_2) \in G \times G$, it makes possible to score the genes by
global differentiation inside the sample. This is another important
field of research, see for example~\cite{BreAnd2013,SCDD2016}.

Several different statistics may be proposed, and we found the
following to be relevant and informative.

\begin{definition}
  The {\emph{global differentiation index}} (GDI) for a gene $g\in G$
  is the quantity
  \[
    \text{GDI}(g)\defin\log(-\log(1-F_{\chi^2(1)}(S_g))),
  \]
  where $F_{\chi^2(1)}$ is the $\chi^2(1)$ cumulative distribution
  function, $\log$ denotes the natural logarithm, and $S_g$ is a very
  high percentile for the test statistics,
  \[
    S_g \defin P_{1 - \alpha} \{ R^2_{g, h} : h \in G \}, \quad g \in G.
  \]
  Here $R_{g, h}$ denotes the co-expression index between $g$ and $h$,
  $P_x (A)$ denotes the $x$-percentile of the sample $A$, and we
  typically set $\alpha = 10^{- 3}$ for a genome $G$ of about $15000$
  genes.
\end{definition}

\begin{figure}[p]
  \label{f:GDI}
  \includegraphics[width=0.8\textwidth]{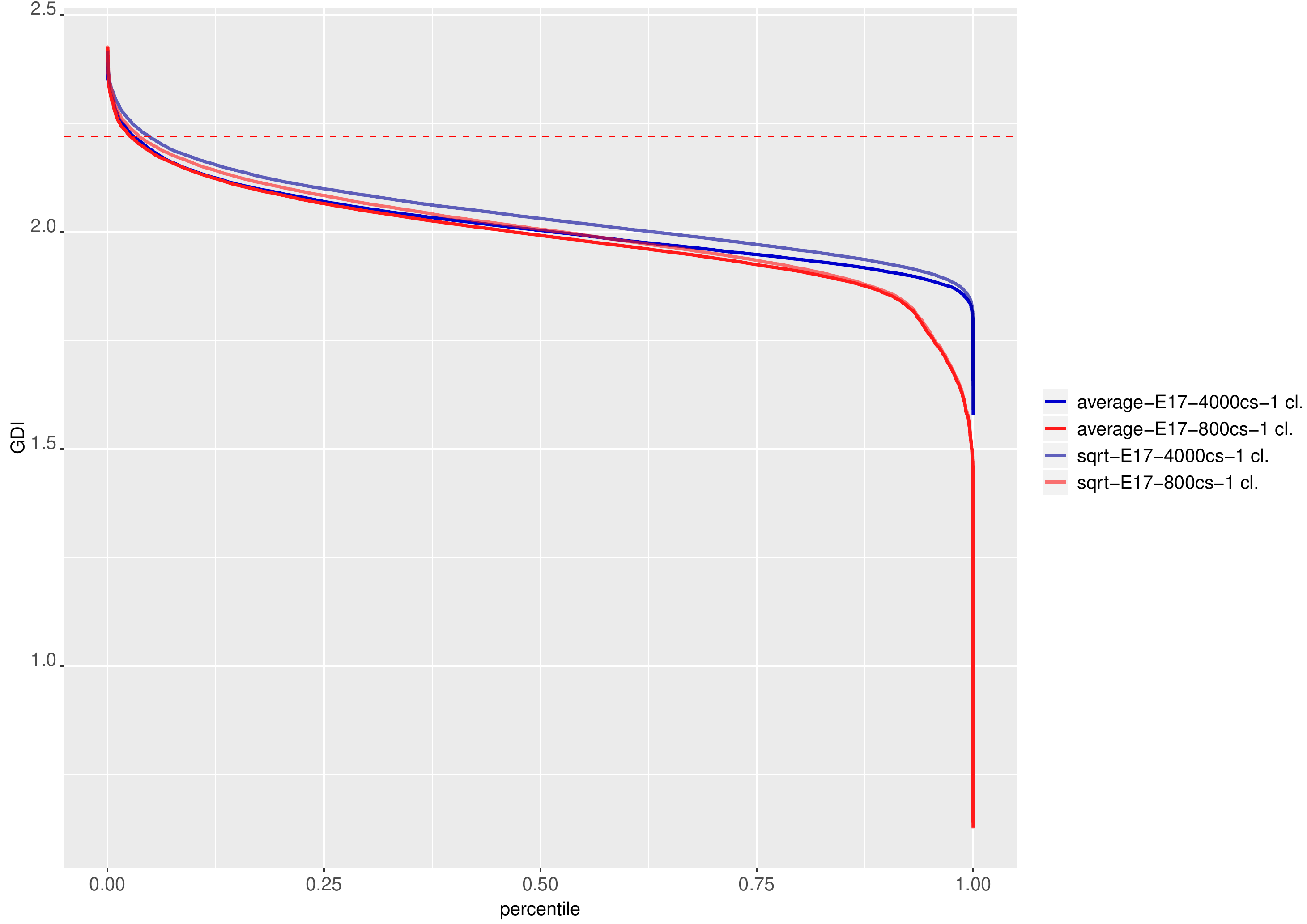}

  \medskip

  \includegraphics[width=0.8\textwidth]{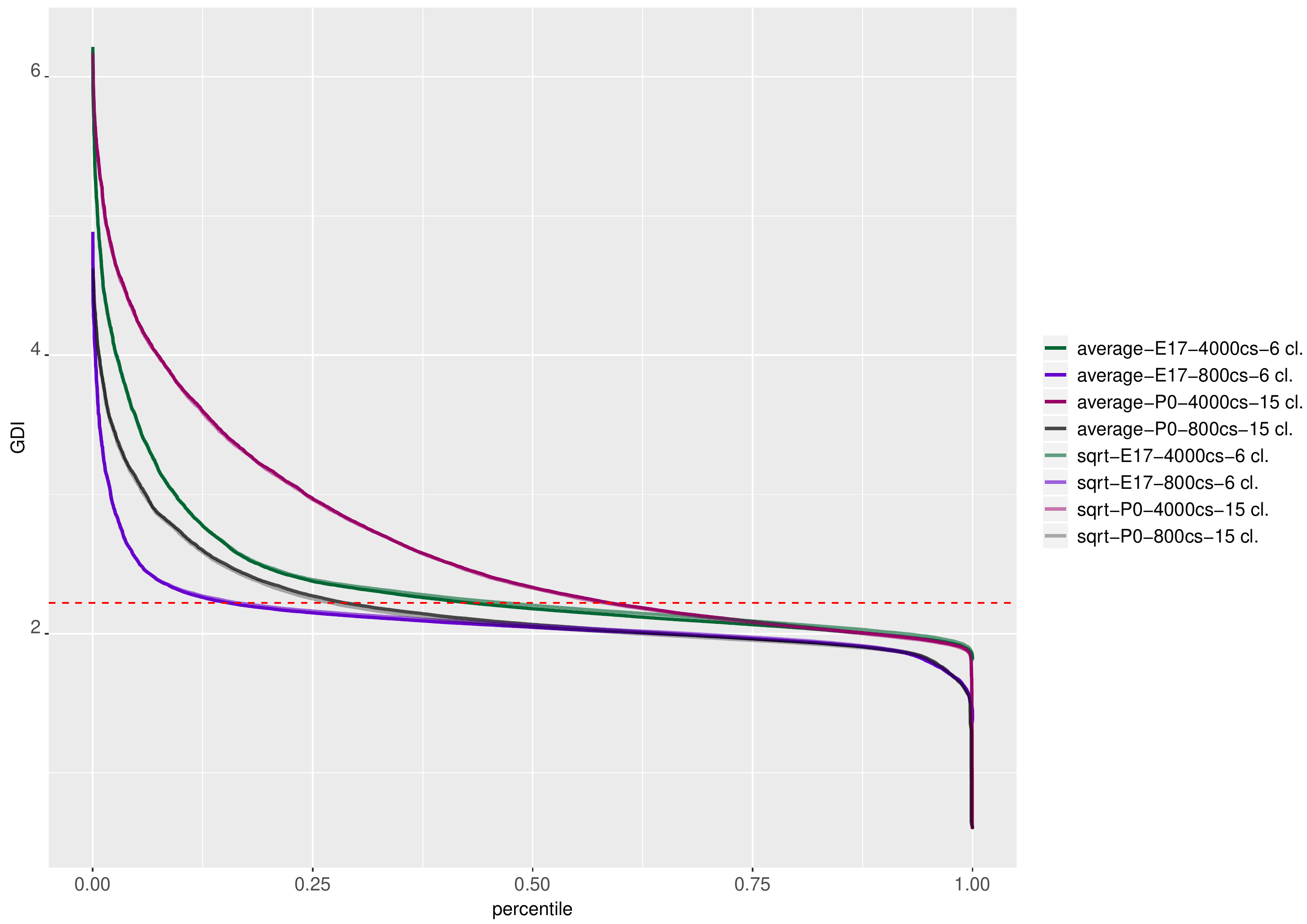}
  \caption{Empirical distribution of GDI from the synthetic datasets
    (see Section~\ref{s:synthetic}).  The first plot is under the null
    hypothesis, as we used the four 1-cluster datasets; the second
    plot is under the alternative hypothesis for a unknown but large
    fraction of the genes, as we used the eight multiple-cluster
    datasets.  The dotted line corresponds to the $10^{-4}$ quantile
    for the approximated global differentiation test. The threshold is
    about $2.2203$ on the GDI scale and about $15.137$ on the $S_g$
    scale. Under the null hypothesis the false positive were
    between 3\% and 5\%.}
\end{figure}

Although the distribution of $S_g$ is difficult, this index (or
$\text{GDI}(g)$, which is just a convenient rescaling of $S_g$) can be
qualitatively used to score genes by how much they are differentiated,
and even to design an approximated test of global differentiation:
from verification with synthetic datasets, under the null hypothesis
that the gene is not differentiated, we found that
$P_{H_0} (S_g > F_{\chi^2(1)}(1-10^{- 4}))$ was between 3\% and 5\%,
so it is possible to use the approximate quantile $10^{- 4}$ for this
statistics.

\section{Synthetic datasets}

\label{s:synthetic}Since several of the conclusion in this work are of
approximate nature, we used Monte Carlo simulation to test their validity, by
generating some synthetic datasets.

Since much depends on the realism of the generated data, we took two
real scRNA-seq datasets, labelled P0 and E17, with different
extraction techniques and very different size and typical extraction
efficiency; we clustered the cells with standard techniques finding 15
clusters for P0 and 6 for E17; inside each cluster separately we
performed a maximum likelihood estimation of all the parameters (the
extraction efficiency $\nu_c$ for all cells, and the two parameters of
the gamma distribution for $\Lambda_g$ for all genes and all
clusters).

For both set of parameters estimated from P0 and E17, we generated 4
random datasets, with different number of cells (800 and 4000) and
both in differentiated and indifferentiated conditions, that is,
either with all clusters or with all cells sampled from just one
cluster.

All datasets were analyzed with our framework, both with average estimators
and with square-root estimators. The value of the parameters was compared with
their estimates, and the distribution of $R$, $W$ and $S_g$ was controlled in
the indifferentiated condition.

\paragraph{Acknowledgements.}
Both author are thankful to Marco Pietrosanto, Manuela
Helmer-Citterich and Federico Cremisi for the precious support and
enlightening discussions.

\bibliographystyle{unsrt}
\bibliography{cotan}

\end{document}